\def\eqref#1{equation~\ref{#1}}
\def\Eqref#1{Equation~\ref{#1}}
\def\1{\bm{1}}
\def\vr{{\bm{r}}}
\def\vv{{\bm{v}}}
\def\vx{{\bm{x}}}
\def\vy{{\bm{y}}}
\def\mH{{\bm{H}}}
\def\mW{{\bm{W}}}
\DeclareMathAlphabet{\mathsfit}{\encodingdefault}{\sfdefault}{m}{sl}
\SetMathAlphabet{\mathsfit}{bold}{\encodingdefault}{\sfdefault}{bx}{n}
\def\gR{{\mathcal{R}}}
\def\gX{{\mathcal{X}}}
\def\gY{{\mathcal{Y}}}
\def\gZ{{\mathcal{Z}}}
\def\sG{{\mathbb{G}}}
\newcommand{\E}{\mathbb{E}}
\newcommand{\R}{\mathbb{R}}
\DeclareMathOperator{\diff}{d}
\DeclareMathOperator{\sign}{sign}
\newtheorem{theorem}{Theorem}
\newtheorem{lemma}{Lemma}
\newtheorem{definition}{Definition}
\DeclareSIUnit\bohr{\ensuremath{a_0}}
\DeclareSIUnit\hartree{\text{\ensuremath{E_\textup{h}}}}
\DeclareSIUnit\calorine{cal}
\DeclareSIUnit\kcal{\kilo\calorine}
\title{On Representing Electronic Wave Functions with Sign Equivariant Neural Networks}
\author{
    Nicholas Gao, Stephan Günnemann\\
    \href{mailto:n.gao@tum.de,s.guennemann@tum.de}{\texttt{\{n.gao,s.guennemann\}@tum.de}}\\
    Department of Computer Science \& Munich Data Science Institute\\
    Technical University of Munich
}
\begin{document}

\maketitle

\begin{abstract}
    Recent neural networks demonstrated impressively accurate approximations of electronic ground-state wave functions. Such neural networks typically consist of a permutation-equivariant neural network followed by a permutation-antisymmetric operation to enforce the electronic exchange symmetry. While accurate, such neural networks are computationally expensive. In this work, we explore the flipped approach, where we first compute antisymmetric quantities based on the electronic coordinates and then apply sign equivariant neural networks to preserve the antisymmetry. While this approach promises acceleration thanks to the lower-dimensional representation, we demonstrate that it reduces to a Jastrow factor, a commonly used permutation-invariant multiplicative factor in the wave function. Our empirical results support this further, finding little to no improvements over baselines. We conclude with neither theoretical nor empirical advantages of sign equivariant functions for representing electronic wave functions within the evaluation of this work.
\end{abstract}

\section{Introduction}
The stationary Schrödinger equation~\citep{schrodingerQuantisierungAlsEigenwertproblem1926} at the heart of quantum chemistry is a partial differential equation in $3N$ dimensions, where $N$ is the number of electrons in the system:
\begin{align}
    \mH\psi = E\psi \label{eq:schroedinger}
\end{align}
where $\psi:\R^{N \times 3}\to\R$ is the wave function, $E$ the energy and $\mH$ the Hamiltonian of the system, see Appendix~\ref{app:vmc}.
Electronic wave functions must obey the exchange antisymmetry $\psi(...,r_i,...,r_j,...)=-\psi(...,r_j,...,r_i,...)$.
Typically, one enforces this by using a so-called Slater determinant 
\begin{align}
    \psi(\vr) &= \det \left[\phi_j(r_i)\right]_{i,j\in\{0, ..., N\}} = \det\Phi(\vr).\label{eq:slater}
\end{align}
In practice, one often uses linear combinations of determinants $\psi(\vr)=\sum_{k=1}^{K}w_k \det\Phi_k(\vr)$.
Further, note that the antisymmetric property is preserved by multiplying the antisymmetric function with a symmetric function $J:\R^{N\times 3}\to\R$, leading to the so-called Slater-Jastrow wave function
\begin{align}
    \psi(\vr)= J(\vr) \sum_{k=1}^K w_k \det\Phi_k(\vr). \label{eq:slater_jastrow}
\end{align}
Recent works improved these Ansätze by replacing the so-called orbital functions $\phi_i$ in Equation~\ref{eq:slater} by neural networks~\citep{hermannInitioQuantumChemistry2023}.
While this has shown remarkably accurate approximations of the electronic wave function of various molecules, it comes at a significant computational cost due to explicit electron-electron interactions.
In classical quantum chemistry, one avoids this high cost by using easily integrable Gaussian basis functions to construct the orbital functions $\phi_i$.
Unfortunately, due to the simple structure of the orbital functions, one may typically require thousands to millions of determinants to capture electronic correlations correctly~\citep{gaoDistributedImplementationFull2024}.

In this work, we aim to reduce the classical large number of determinants via neural networks.
But, instead of defining the orbital functions via neural networks, we explore non-linear combinations of determinants, i.e., we are reformulating the classic Slater-Jastrow wave function to
\begin{align}
    \psi(\vr) &= f\left(
    \left[\det\Phi_k(\vr)\right]_{k=1}^K,
    J(\vr)
    \right), \label{eq:odd_wf}
\end{align}
where $f(\vx,y)=y\sum_{k=1}^Kw_k x_k$ recovers the classical case.
To accomplish this, we define a set of sign equivariant operations to preserve the antisymmetry of the determinants in the final output. 
Our theoretical analysis finds that such an algebraic construction is identical to a Jastrow factor and, thus, cannot shrink the zero set of the wave function.
We support these findings via empirical results.

\section{Related Work}\label{sec:related_work}
Solving the stationary Schrödinger equation accurately for many systems opens the path to fast and accurate machine-learned force fields~\citep{schuttSchNetDeepLearning2018,kosmalaEwaldbasedLongRangeMessage2023a,wollschlagerUncertaintyEstimationMolecules2023}.
The machine learning approach to this has been, since the first work by \citet{louNeuralWaveFunctions2023}, to parameterize the orbitals from from Equation~\ref{eq:slater} with neural networks~\citep{hermannInitioQuantumChemistry2023,zhangArtificialIntelligenceScience2023}.
While subsequent works tweak architectures~\citep{gerardGoldstandardSolutionsSchrodinger2022,vonglehnSelfAttentionAnsatzAbinitio2023}, explore new applications~\citep{cassellaNeuralNetworkVariational2023,kimNeuralnetworkQuantumStates2023,louNeuralWaveFunctions2023,wilsonWaveFunctionAnsatz2022,pesciaMessagePassingNeuralQuantum2023}, compute excited states~\citep{pfauNaturalQuantumMonte2023,entwistleElectronicExcitedStates2022}, generalize across molecules~\citep{scherbelaSolvingElectronicSchrodinger2022,scherbelaTransferableFermionicNeural2024,scherbelaVariationalMonteCarlo2023,gaoAbInitioPotentialEnergy2022,gaoSamplingfreeInferenceAbInitio2023,gaoGeneralizingNeuralWave2023}, or explore the use of Diffusion Monte Carlo~\citep{wilsonSimulationsStateoftheartFermionic2021,renGroundStateMolecules2023}, the underlying structure of the wave function remained the same.
This work explores a different approach by parametrizing the wave function via non-linear combinations of determinants.

\section{Sign Equivariant Functions}\label{sec:method}
Throughout this work, we will mainly refer to two different symmetries. Firstly, the fermionic antisymmetry to permutations, i.e., $\psi(\pi(\vr))=\sign(\pi)\psi(\vr)$, and, secondly, to odd function, more formally functions that are equivariant to the cyclic group $C_2$, i.e., $f(-x)=-f(x)$.
The framework of equivariance allows us to define these symmetries more generally.
\begin{definition}
    A function $f:\gX\rightarrow\gY$ on real vector spaces $\gX,\gY$ is said to be equivariant under group $\sG$ iff $\forall g\in\sG$ $f(G^\gX_g x)=G^\gY_g f(x)$
    where $G^\gX_g$, $G^\gY_g$ are the group representations of $g$ acting on the vector spaces $\gX$ and $\gY$, respectively.
\end{definition}
\vspace{-.5em}
As a special case of equivariance, one can define invariance where the result of a function does not change under group actions, i.e., $G^\gY_g$ is the identity for all $g\in\sG$.
Given these definitions, we can now concretely describe antisymmetric and odd functions.
Further, to aid later discussion, we also introduce symmetric and even functions as counterparts.
\begin{definition}
    A function $f:\gX\to\gY$ is \emph{antisymmetric} iff $f$ is equivariant under the symmetric group $S_n$ and the group acts on $\gY$ as $\vy\mapsto\sign(\pi)\vy, \forall\pi\in S_n, \vy\in\gY$. 
\end{definition}
\begin{definition}
    A function $f:\gX\to\gY$ is \emph{symmetric} iff $f$ is invariant under the symmetric group $S_n$.
\end{definition}
\begin{definition}
    A function $f:\gX\to\gY$ is \emph{odd} iff $f$ is equivariant under the cyclic group $C_2=\{-1, 1\}$ and the group acts on $\gY$ as $\vy\mapsto g\vy, \forall g\in C_2, \vy\in\gY$. 
\end{definition}
\begin{definition}
    A function $f:\gX\to\gY$ is \emph{even} iff $f$ is invariant under the cyclic group $C_2$.
\end{definition}
\vspace{-.5em}
\textbf{Implicit Odd Functions}
A function $f$ acting on the vector of determinants $\vx=\left[\det \Phi_k(\vr)\right]_{k=1}^K$ must be odd, i.e., $f(-x)=-f(x)$, to preserve the fermionic antisymmetry.
Let $f_1$ and $f_2$ be odd functions and $g$ an even function. The following functions are odd: (1) multiplication with a constant $\alpha f(x)$, (2) addition $f_1(x)+f_2(x)$, (3) element-wise odd functions, e.g., $f(x)=\left[\tanh(x_i)\right]_{k=1}^K$, (4) multiplication with an even function $g$, i.e., $f(x)g(x)$, and (5) chaining $f_2(f_1(x))$.
With (5), we can compose complex functions by chaining simpler ones.Combining (1) and (2) yields linear combinations, i.e., we can construct linear layers without bias terms.
We construct neural networks $f^{(T)}$ by combining this with non-linear activation functions (3):
\begin{align}
    f^{(t+1)}(\vx^{(t)}) &= \tanh\left(\vx^{(t)}\mW^{(t)}\right) * J^{(t)}\label{eq:implicit}
\end{align}
where $J^{(t)}$ is obtained from a Jastrow factor, i.e., a symmetric function of the electronic coordinates.

\textbf{Explicit Odd Functions.}
Alternatively, given an arbitrary function $g:\gX\to\gY$, one can construct an odd function $f:\gX\to\gY$ via $f(x)=g(x)-g(-x)$.
Inversely, as proven in Appendix~\ref{app:proof_odd_construction}, every odd function $f$ can be expressed by a non-odd function $g$:
\begin{theorem}\label{thm:odd_construction}
    Any odd function $f:\gX \to \gY$ on real vector spaces $\gX, \gY$ can be represented as $f(\vx)=g(\vx)-g(-\vx)$ where $g:\gX \to \gY, \vx\in\gX$.
\end{theorem}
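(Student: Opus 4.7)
The plan is to exhibit an explicit $g$ that works, namely the trivial choice $g(\vx) = \tfrac{1}{2} f(\vx)$, and verify the identity by direct computation using only the definition of oddness. So concretely I would first set $g := \tfrac{1}{2} f$, which is well-defined since $\gY$ is a real vector space and scalar multiplication by $\tfrac{1}{2}$ is available.

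Next I would compute, for an arbitrary $\vx \in \gX$,
\begin{align*}
    g(\vx) - g(-\vx) = \tfrac{1}{2} f(\vx) - \tfrac{1}{2} f(-\vx) = \tfrac{1}{2} f(\vx) - \tfrac{1}{2}\bigl(-f(\vx)\bigr) = f(\vx),
\end{align*}
where the middle equality uses the assumed oddness $f(-\vx) = -f(\vx)$ and the last equality uses that $\gY$ is a real vector space. This shows existence of the decomposition claimed in the theorem.

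I would also briefly remark that the choice of $g$ is far from unique: if $g$ witnesses the decomposition, then so does $g + h$ for any even $h:\gX\to\gY$, since the even part cancels under $\vx\mapsto -\vx$. Conversely, the odd part of any $g$ satisfying the identity is forced to equal $\tfrac{1}{2}f$, so the statement is really the standard decomposition of a function into its even and odd parts, restricted to the case where the even part is zero.

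The main potential obstacle one might worry about is regularity of $g$ (e.g., measurability, continuity, or smoothness if $f$ has such a property); however, since the theorem only asserts existence of $g$ as a function between real vector spaces without any regularity assumption, no such difficulty arises, and $g = \tfrac{1}{2} f$ automatically inherits whatever regularity $f$ has. Hence there is no substantive obstacle, and the proof reduces to the one-line verification above.
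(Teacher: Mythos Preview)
Your proof is correct and actually cleaner than the paper's. The paper does \emph{not} take $g=\tfrac{1}{2}f$; instead it fixes a nonzero $\vv\in\gX$ and defines $g(\vx)=f(\vx)$ when $\vx^T\vv>0$ and $g(\vx)=\mathbf{0}$ otherwise, then checks the two half-space cases. Your choice is more elementary, works at every point without case analysis, and, as you note, automatically inherits any regularity of $f$. The one thing the paper's construction buys is that its $g$ is visibly \emph{not} odd (it vanishes on a half-space), which matches the surrounding narrative that a generic, non-odd network can serve as $g$; your $g=\tfrac{1}{2}f$ is itself odd, so while it proves the theorem as stated, it does not illustrate that point. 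Your remark that any even $h$ can be added already covers this: taking $h=\tfrac{1}{2}f\cdot\mathbf{1}_{\{\vx^T\vv>0\}}-\tfrac{1}{2}f\cdot\mathbf{1}_{\{\vx^T\vv\le 0\}}$ (an even function since $f$ is odd) recovers exactly the paper's $g$ from yours.
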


In our experiments, we implement $g$ via an MLP with Jastrow factors $J^{(0)}$ and $J^{(T)}$:
\begin{align}
    f(\vx) = \left(\text{MLP}\left(\vx * J^{(0)}\right) - \text{MLP}\left(-\vx * J^{(0)}\right)\right) * J^{(T)}.\label{eq:explicit}
\end{align}

\textbf{Linear-logarithmic Domain.}
One may notice that the distribution of $\psi$ for samples drawn from $\psi^2$ varies by several orders of magnitude.
We illustrate this in Appendix~\ref{app:histogram} for LiH.
While well separated in the logarithmic domain, the data is crammed into a small region when viewed in the linear domain.
Unfortunately, we cannot directly work in the log-domain as it prohibits any intermediate values from being zero.
We address this by working in a domain that resembles a linear relationship around zero and a logarithmic relationship far from zero.
We define the $\text{linlog}$ transformation as
\begin{align}
    \text{linlog}_\alpha(x) &= \sign(x)\log(\vert x\vert e^{\alpha} + 1) \label{eq:transform}
\end{align}
where $\alpha$ controls a shift of the data towards the linear or logarithmic region.
We depict this function and discuss its stable implementation in Appendix~\ref{app:transform}.

\section{Theoretical Results}
\textbf{Equivalence to Jastrow Factor.}
Before empirically analyzing the sign equivariant functions, we point out an equivalence between odd functions and a Jastrow factor in the context of electronic wave functions.
Interestingly, one can show that all functions of the form in Equation~\ref{eq:odd_wf} can be represented by functions in the classical Slater-Jastrow form in Equation~\ref{eq:slater_jastrow}.
\begin{theorem}\label{thm:decomposition}
    Let $\gR, \gX, \gY, \gZ$ be real vector spaces, $\phi:\gR\to\gX$ an antisymmetric function, $J:\gR\to\gY$ a symmetric function, and $f:\gX\times\gY\to\gZ$ an in the first argument odd function, i.e., $f(-\vx,\vy)=-f(\vx,\vy)$.
    Any antisymmetric function $\psi:\gR\to\gZ; \psi(\vr)=f(\phi(\vr), J(\vr))$ can be expressed a.e. as $\psi(\vr)=(\phi(\vr)^T\vx) \hat{J}(\vr)$ with $\vx\in\gX\setminus\{\mathbf{0}\}$ and a symmetric function $\hat{J}:\gR \to \gZ$.
\end{theorem}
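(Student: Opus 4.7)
The plan centers on a simple algebraic observation: the pointwise ratio of two antisymmetric scalars is symmetric. So if I can exhibit an antisymmetric scalar $\alpha(\vr) = \phi(\vr)^T\vx$ that is nonzero almost everywhere, then the quotient $\hat{J}(\vr) := \psi(\vr)/\alpha(\vr)$ will automatically be symmetric a.e., yielding the desired factorization $\psi(\vr) = (\phi(\vr)^T\vx)\,\hat{J}(\vr)$. Everything beyond this observation is either a triviality or a measure-theoretic detail.

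First I dispose of the degenerate case $\phi \equiv 0$: oddness of $f$ in its first argument forces $f(0,\vy) = -f(0,\vy) = 0$, so $\psi \equiv 0$ and the theorem holds with $\hat{J} \equiv 0$ and any nonzero $\vx$. Otherwise, I choose $\vr_0 \in \gR$ with $\phi(\vr_0)\neq 0$ and set $\vx := \phi(\vr_0)$, which is a nonzero element of $\gX$ and satisfies $\alpha(\vr_0) = \|\phi(\vr_0)\|^2 > 0$, so the antisymmetric scalar $\alpha$ is not identically zero. Invoking the regularity tacitly assumed in this setting—the components of $\phi$ are real-analytic, since they arise from determinants of smooth orbitals—any nontrivial $\alpha$ of this form has zero set of Lebesgue measure zero, so $\alpha\neq 0$ almost everywhere.

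Next, I define $\hat{J}(\vr) := \psi(\vr)/\alpha(\vr)$ on $\{\alpha\neq 0\}$ and extend it arbitrarily (say by zero) on the negligible nodal set. For any permutation $\pi$ with $\sigma = \sign(\pi)$, antisymmetry of $\phi$ gives $\alpha(\pi\vr) = \sigma\alpha(\vr)$, and the oddness of $f$ in its first slot combined with symmetry of $J$ yields $\psi(\pi\vr) = f(\sigma\phi(\vr),J(\vr)) = \sigma\psi(\vr)$. The two signs cancel in the quotient, so $\hat{J}(\pi\vr) = \hat{J}(\vr)$ a.e., confirming that $\hat{J}$ is symmetric. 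The identity $\psi(\vr)=(\phi(\vr)^T\vx)\hat{J}(\vr)$ then holds a.e.\ by construction.

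The main obstacle is not the algebra, which is a two-line sign cancellation, but the measure-theoretic control of the zero set of $\alpha$: the ``a.e.'' qualifier in the theorem statement is doing real work, and without some regularity on $\phi$ one could not exclude a large nodal set on which $\hat{J}$ is ill-defined. Once that regularity is granted—as is standard for Slater-determinant Ansätze—the remaining steps are immediate.
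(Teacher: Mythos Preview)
Your proof is correct and follows essentially the same route as the paper: define $\hat{J}(\vr)=\psi(\vr)/(\phi(\vr)^T\vx)$ for a fixed nonzero $\vx\in\gX$ and verify that the two sign flips under a permutation cancel in the quotient. The paper packages the first step as a separate lemma (any odd $f$ factors a.e.\ as a linear form times an even function) and takes $\vx$ to be an arbitrary nonzero vector rather than a value of $\phi$, but the algebra is identical; your treatment of the degenerate case and of the ``a.e.''\ qualifier is in fact more careful than the paper's, which simply divides without discussing the nodal set.
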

We prove this theorem in Appendix~\ref{app:proof_decomposition}.
In the context of quantum chemistry, this result implies that a linear combination and a non-positive-constrained Jastrow factor can equally represent any odd non-linear combination of the antisymmetric function.
While this result guarantees the existence of such a Jastrow factor, it provides no statement about the ease of finding such a solution.

\textbf{Cusp Conditions.}
Thanks to \citet{katoEigenfunctionsManyparticleSystems1957}'s theorem, we know that the solutions to \Eqref{eq:schroedinger} must fulfill the cusp condition $\lim_{\vr \to R_m} -\frac{1}{\psi(\vr)}\frac{\partial \psi(\vr)}{\partial \vr} = Z_m$ where $R_m$, $Z_m$ are the position and charge of the $m$-th nucleus, respectively.
However, one cannot fulfill the cusp conditions if one chooses Gaussian-type orbitals as basis functions $\phi_i$ in Equation~\ref{eq:slater}.
In fact, all extrema of the original wave function are preserved.
One can verify this via the chain rule on $\psi(\vr)=f(\left[\det \Phi_k(\vr)\right]_{k=1}^K)$:
\begin{align}
    \frac{\partial \psi(\vr)}{\partial \vr} &= 
        \frac{\partial \psi(\vr)}{\partial \left[
            \det \Phi_k(\vr)
        \right]_{k=1}^K}
        \frac{\partial \left[
            \det \Phi_k(\vr)
        \right]_{k=1}^K}{\partial \vr}.
\end{align}
Thus, anytime the orbital derivatives are 0, the derivatives of the final wave function will be 0.
As the derivatives of Gaussian-type orbitals are zero at the nuclei, the cusp conditions cannot be fulfilled.

\section{Empirical Results}\label{sec:experiments}
While Theorem~\ref{thm:decomposition} guarantees the existence of equivalent Jastrows, it does not provide insights into finding one.
Thus, we experimentally evaluate the impact of odd functions on the energy of wave functions.
We test odd functions on four different systems, LiH, Li2, and two states of N2, as detailed in Appendix~\ref{app:structures}.
We use two different approaches to get the initial antisymmetric input for the odd functions: (1) CASSCF wave functions, a series of Slater determinants from standard quantum chemistry~\citep{szaboModernQuantumChemistry2012}, and (2) FermiNet, a neural network wave functions~\citep{pfauInitioSolutionManyelectron2020}.
For optimization, we use the variational Monte Carlo (VMC) framework, which we highlight in Appendix~\ref{app:vmc}, with K-FAC~\citep{martensOptimizingNeuralNetworks2015}.
Ablations with Prodigy~\citep{mishchenkoProdigyExpeditiouslyAdaptive2023} can be found in Appendix~\ref{app:additional_experiments} and the exact hyperparameters in Appendix~\ref{app:setup}.

\begin{figure}
    \centering
    \includegraphics[width=\linewidth]{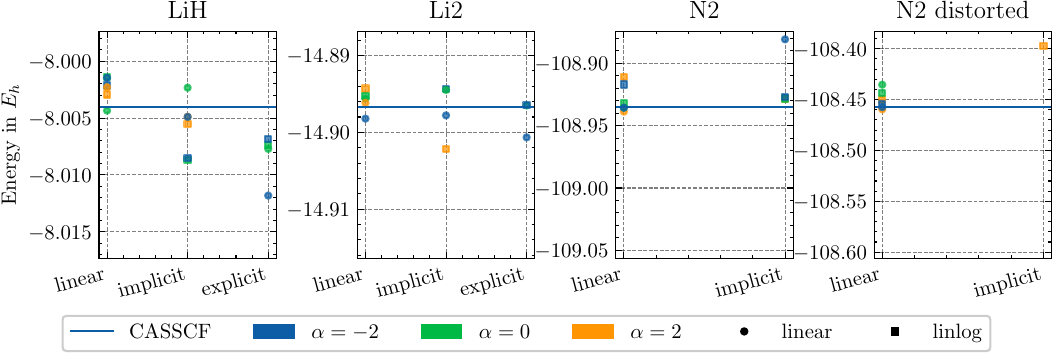}
    \caption{CASSCF for different values of $\alpha$ without symmetric functions.}
    \label{fig:casscf}
\end{figure}

\begin{figure}
    \centering
    \includegraphics[width=\linewidth]{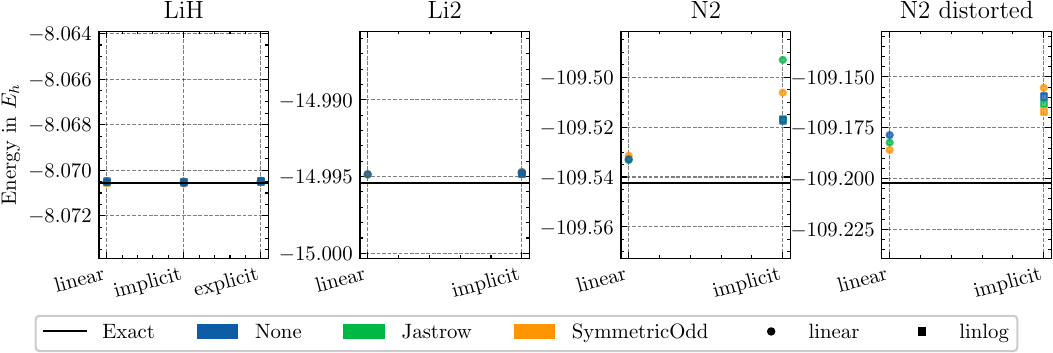}
    \caption{
        FermiNet with different choices of symmetric functions and $\alpha=-2$.
    }
    \label{fig:ferminet}
\end{figure} 

\textbf{CASSCF.} We first use the cheaper CASSCF wave functions to test the hyperparameter $\alpha\in\{-2, 0, 2\}$'s impact from Equation~\ref{eq:transform} on the final energy.
Additionally, we repeat the experiment in the linear and linlog domain.
Further, we compare a classical linear readout, implicit odd functions from Equation~\ref{eq:implicit}, and the explicit odd functions from Equation~\ref{eq:explicit}.
The final energies are plotted in Figure~\ref{fig:casscf}.
On small structures like LiH and Li2, the explicit odd functions improve energies by up to $\SI{8}{\milli\hartree}$ and $\SI{4}{\milli\hartree{}}$, respectively.
Unfortunately, this advantage does not carry over to larger, more challenging molecules like the nitrogen dimer. 
We observe significant numerical instabilities with implicit or explicit odd functions in the optimization process; we explore this further in Appendix~\ref{app:additional_experiments}.

\textbf{FermiNet.} For  FermiNet we picked the best performing $\alpha$ from the CASSCF experiments, i.e., $\alpha=-2$.
We evaluate the energy change by including a Jastrow factor.
We consider three types of Jastrows, (1) none, (2) a traditional Jastrow, Equation~\ref{eq:slater_jastrow}, (Jastrow), and (3) a Jastrow included the odd function, Equations~\ref{eq:implicit} and \ref{eq:explicit}, (SymmetricOdd).
Appendix~\ref{app:setup} details the exact Jastrow function.
In contrast to CASSCF wave functions, FermiNet can closely recover the ground state thanks to the explicit inclusion of electron correlation in the orbitals.
We plot the final energies in Figure~\ref{fig:ferminet}.
However, neither odd function positively impacts the optimization but worsens energies.
Adding a Jastrow factor leads to improvements in the more challenging nitrogen dimer.

\section{Conclusion}\label{ref:conclusion}
Moving neural network complexity from the orbital functions to a non-linear combination of simpler basis functions promises faster quantum chemistry methods than deep neural networks applied to electronic coordinates.
To design such functions, we present two approaches: implicit constructions by combining simple odd functions and explicitly odd functions where one enforces the antisymmetry at the end.
While our experimental results show that such odd functions combined with classical determinants can yield better results on small structures, optimizing such odd functions proves difficult.
At larger system sizes, we frequently observe numerical instabilities and degraded performance.
In combination with neural network wave functions, we found such odd functions to worsen energies.
From a theoretical point of view, we have shown the limitations of odd functions.
Specifically, they correspond to a subset of the set of functions obtainable via Jastrow factors.
Given the theoretical and empirical results, we find little evidence for computational or accuracy advantages of non-linear combinations of Slater determinants for machine-learning quantum chemistry.

\section*{Acknowledgements}
We thank Arthur Kosmala for his invaluable feedback on the manuscript, and Jan Schuchardt for providing additional financial support during the writing stage.
Large parts of this work were done during Nicholas Gao's internship at Microsoft Research.
Funded by the Federal Ministry of Education and Research (BMBF) and the Free State of Bavaria under the Excellence Strategy of the Federal Government and the Länder.

\bibliographystyle{iclr2024_workshop}
\bibliography{bib}

\clearpage
\appendix
\section{Proof of Theorem 1}\label{app:proof_odd_construction}
\begin{proof}[Proof of Theorem~\ref{thm:odd_construction}]
    Define
    \begin{align}
        g(\vx) &= \begin{cases}
            f(\vx) & \text{, if } \vx^T\vv > 0,\\
            \mathbf{0} & \text{, else,}
        \end{cases} \label{eq:g_construction}
    \end{align}
    for some arbitrary non-zero vector $v\in\gX_{/\mathbf{0}}$. This construction yields
    \begin{align}
        g(\vx) - g(-\vx) &= \begin{cases}
            f(\vx) & \text{, if } \vx^T\vv > 0,\\
            - f(-\vx) & \text{, else}
        \end{cases}\\
        &\stackrel{f \text{odd}}{=} f(\vx).
    \end{align}
\end{proof}

\section{Proof of Theorem 2}\label{app:proof_decomposition}
\begin{lemma}\label{le:linear}
    Every odd function $f:\gX\rightarrow\gY$ on real vector spaces $\gX, \gY$ can be represented almost everywhere as a product of a linear combination $\vx^T\vv, \vv\in\gX\setminus\{\mathbf{0}\}$ and an even function $h:\gX\rightarrow\gY, h(\vx)=\frac{f(\vx)}{\vx^T\vv}$, i.e., 
    \begin{align}
        f(\vx)=\vx^T\vv h(\vx).\label{eq:linear}
    \end{align}
\end{lemma}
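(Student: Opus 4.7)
The plan is to prove Lemma~\ref{le:linear} by direct construction: fix a non-zero $\vv\in\gX$ and simply define $h(\vx):=f(\vx)/(\vx^T\vv)$ wherever the denominator is non-zero, then verify both the factorization and the evenness.

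First I would observe that the set $H_\vv := \{\vx\in\gX : \vx^T\vv = 0\}$ is a proper linear hyperplane in the finite-dimensional real vector space $\gX$, hence has Lebesgue measure zero. Therefore defining $h$ only on $\gX\setminus H_\vv$ (and extending it arbitrarily, e.g.\ by $\mathbf{0}$, on $H_\vv$) already gives a function defined almost everywhere, which is precisely what the statement requires.

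Next, with $h(\vx)=f(\vx)/(\vx^T\vv)$ on $\gX\setminus H_\vv$, the factorization $f(\vx)=\vx^T\vv\,h(\vx)$ holds by construction at every such point. To check evenness, I would compute, for any $\vx\notin H_\vv$ (note $-\vx\notin H_\vv$ as well since $H_\vv$ is a linear subspace),
\begin{align}
h(-\vx) \;=\; \frac{f(-\vx)}{(-\vx)^T\vv} \;=\; \frac{-f(\vx)}{-\vx^T\vv} \;=\; \frac{f(\vx)}{\vx^T\vv} \;=\; h(\vx),
\end{align}
using the oddness of $f$ in the second equality. Thus $h$ is even on the complement of a measure-zero set, completing the proof.

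I do not expect a genuine obstacle here; the only subtlety worth flagging is the "almost everywhere" qualifier, which is unavoidable because $\vx^T\vv$ vanishes on $H_\vv$ while $f$ itself need not, so no pointwise factorization through $\vx^T\vv$ can exist on all of $\gX$ in general. This is precisely why the lemma is stated a.e.\ and why the theorem in Appendix~\ref{app:proof_decomposition} must inherit the same qualifier.
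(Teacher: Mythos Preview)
Your proposal is correct and follows essentially the same construction as the paper: define $h(\vx)=f(\vx)/(\vx^T\vv)$, check evenness via $h(-\vx)=\frac{-f(\vx)}{-\vx^T\vv}=h(\vx)$, and observe the factorization is then tautological. If anything, you are more careful than the paper, since you explicitly justify the ``almost everywhere'' qualifier by noting that $H_\vv$ is a measure-zero hyperplane, whereas the paper's proof leaves this implicit.
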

\begin{proof}[Proof of Lemma~\ref{le:linear}]
    We start with proving that $h(\vx)=\frac{f(\vx)}{\vx^T\vv}$ for some $\vv\in\gX/\{\mathbf{0}\}$ is an even function
    \begin{align}
        h(-\vx) = \frac{f(-\vx)}{-\vx^T\vv} = \frac{-f(\vx)}{-\vx^T\vv} = \frac{f(\vx)}{\vx^T\vv} = h(\vx).
    \end{align}
    Plugging $h$ into Equation~\ref{eq:linear} yields
    \begin{align}
        \vx^T\vv h(\vx) = \vx^T\vv \frac{f(\vx)}{\vx^T\vv} = f(\vx).
    \end{align}
\end{proof}

\begin{proof}[Proof of Theorem~\ref{thm:decomposition}]
    Directly applying Lemma~\ref{le:linear}, we get
    \begin{align}
        f(\phi(\vr), J(\vr)) =& (\phi(\vr)^T\vv) \frac{f\left(\phi(\vr), J(\vr)\right)}{\phi(\vr)^T\vv}.
    \end{align}
    If we set $\hat{J}(\vr)=\frac{f\left(\phi(\vr), J(\vr)\right)}{\phi(\vr)^T\vv}$, it remains to show that $\hat{J}$ is symmetric:
    \begin{align}
        \hat{J}(\pi(\vr))=&\frac{
            f\left(\phi(\pi(\vr)), J(\pi(\vr))\right)
        }{
            \phi(\pi(\vr))^T\vv
        }\\
        =&\frac{
            f\left(\sign(\pi)\phi(\vr), J(\vr)\right)
        }{
            \sign(\pi)\phi(\vr)^T\vv
        } && \text{def. $\phi, J$}\\
        =&\frac{
            \sign(\pi)f\left(\phi(\vr), J(\vr)\right)
        }{
            \sign(\pi)\phi(\vr)^T\vv
        } && \text{def. $f$}\\
        =&\hat{J}(r).
    \end{align}
\end{proof}

\section{Distribution of Wave Function Amplitudes}\label{app:histogram}
\begin{figure}
    \centering
    \includegraphics[width=\linewidth]{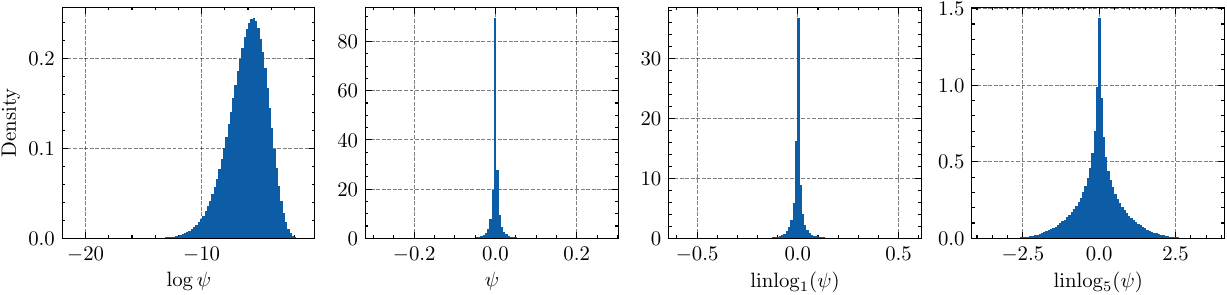}
    \caption{Distribution of wave function amplitudes for LiH in logarithmic (left), linear (center left), and lin-log (right figures) domain. Magnitudes vary between $10^{-2}$ and $10^{-20}$.}
    \label{fig:histogram}
\end{figure}
In Figure~\ref{fig:histogram}, we plot the distribution of wave function amplitudes for LiH in the logarithmic, linear, and linlog domains.
We drew samples from $\psi^2$ and plotted the distribution of the amplitudes $\psi$.
While the data looks well distributed in the logarithmic domain, it is crammed into a small region in the linear domain.
Our linlog transformation from \Eqref{eq:transform} addresses this issue by viewing part of the data in the linear domain and part in the logarithmic domain.
The scaling factor $\alpha$ controls the shift between the two domains.
It plays a crucial role as a hyperparameter in our experiments.
A small $\alpha$ leads to a spiky distribution like in the linear domain, and a large $\alpha$ leads to a flat distribution.

\section{LinLog Transformation}\label{app:transform}
\begin{figure}
    \centering
    \includegraphics{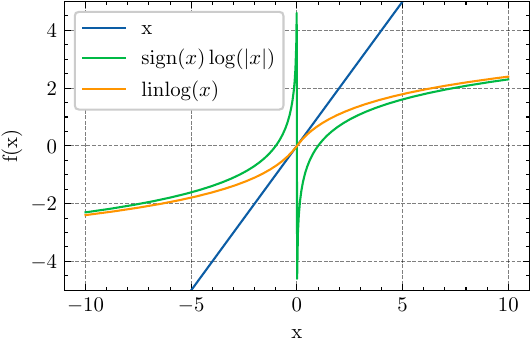}
    \caption{
        Illustration of different domains relative to linear data.
    }
    \label{fig:domains}
\end{figure}
Our proposed linlog transformation from \Eqref{eq:transform} is $C^\infty$ smooth and can be numerically stably implemented to transform from and to the logarithmic domain directly via 
\begin{align}
    \text{linlog}_\alpha(\sign(x), \log(\vert x\vert)) &= \sign(x)\text{softplus}(\log(\vert x\vert) + \alpha).
\end{align}
To avoid the spiky distribution one typically observed in the linear domain, we initialize $\alpha=\alpha_\text{init}+{\operatorfont median}_i\{\max_k\{\det\Phi_k(\vr_i)\}\}$ via a hyperparameter $\alpha_\text{init}$ and the median of a batch of electronic configurations sampled from $\psi^2$ with a linear readout instead of the learnable odd function.
We plot the transformation relative to the linear domain in Figure~\ref{fig:domains}.

\section{Used structures}\label{app:structures}
\begin{table}
    \centering
    \begin{tabular}{lcccc}
        \toprule
        Name & LiH & Li2 & N2 & N2 distorted \\
        \midrule
        Distance ($a_0$) & 3.015 & 5.051 & 2.068 & 4.0 \\
        Active space (Orbitals, Electrons) & (6, 2) & (6, 2) & (6, 10) & (6, 10) \\
        \bottomrule
    \end{tabular}
    \caption{Systems evaluated in this work.}
    \label{tab:systems}
\end{table}
We use four different systems for our empirical analysis: LiH, Li2, N2, and a distorted N2.
For LiH and Li2, we pick the equilibrium structures from \citet{pfauInitioSolutionManyelectron2020}.
For N2, we use the equilibrium structure from \citet{pfauInitioSolutionManyelectron2020} as well a highly distorted structure where the error of FermiNet from \citet{pfauInitioSolutionManyelectron2020} peaks.
All structures are listed in Table~\ref{tab:systems}.
As active space for the CASSCF calculations, we pick all valence orbitals and electrons.

\section{Variational Monte Carlo}\label{app:vmc}
Variational Monte Carlo (VMC) is a method to approximate solutions to the Schrödinger equation from Equation~\ref{eq:schroedinger}.
In this work, we are interested in molecular systems where the Hamiltonian takes the following form in the Born-Oppenheimer approximation:
\begin{align}
    \mH = -\sum_{i=1}^N \frac{1}{2} \frac{\partial^2}{\partial r_i^2} - \sum_{i=1}^N \sum_{m=1}^M \frac{Z_m}{\Vert r_i - R_m\Vert} + \sum_{i=1}^N \sum_{j=i+1}^N \frac{1}{\Vert r_i - r_j\Vert} + \sum_{m=1}^M \sum_{n=i+1}^M \frac{Z_mZ_n}{\Vert R_m - R_n\Vert}\label{eq:hamiltonian}
\end{align}
with $r_i$ being the position of the $i$-th electron, $R_m$ the position of the $m$-th nucleus, and $Z_m$ the charge of the $m$-th nucleus.
In linear algebra, \Eqref{eq:schroedinger} is an eigenvalue problem, i.e., we are looking for the lowest eigenvalue $E_0$ and corresponding eigenvector $\psi_0$.
To accomplish this, we use the variational principle, i.e., we approximate the ground state by a trial wave function $\psi_T$ and minimize the energy expectation value.
The variational principle~\citep{szaboModernQuantumChemistry2012} states that the energy expectation value of any trial wave function $\psi_\theta$ is an upper bound to the ground state energy $E_0$:
\begin{align}
    E_0 \leq \frac{\int \psi_\theta(\vr) \mH \psi_\theta(\vr) \diff \vr}{\int \psi^2_\theta(\vr) \diff \vr}. \label{eq:variational_principle}
\end{align}
If we now define the probability density function (PDF) $p(\vr)=\frac{\psi^2(\vr)}{\int \psi^2(\vr) \diff \vr}$, we can rewrite Equation~\ref{eq:variational_principle} as
\begin{align}
    E_0 \leq \int p(\vr) \frac{\mH \psi_\theta(\vr)}{\psi_\theta(\vr)} \diff \vr = \int p(\vr) E_L(\vr) \diff \vr
    = \E_{p(\vr)}\left[E_L(\vr)\right]
    \label{eq:local_energy}
\end{align}
where the right-hand side is the so-called VMC energy.
By taking gradients of the VMC energy to the parameters $\theta$ of the trial wave function $\psi_\theta$, we can optimize the parameters to minimize the energy.
These gradients can be computed as 
\begin{align}
    \nabla_\theta = \E_{p(\vr)}\left[
        (E_L(\vr) - \E_{p(\vr)}\left[
            E_L(\vr)
        \right])
        \nabla_\theta \log \psi_\theta(\vr)
    \right]. \label{eq:gradient}
\end{align}
By approximating the expectation values via Monte Carlo sampling, we can compute the gradients via Equation~\ref{eq:gradient} and optimize the parameters via gradient descent~\citep{ceperleyMonteCarloSimulation1977}.

\section{Setup}\label{app:setup}
This section details the setup of our experiments.

As Jastrow factors, we use an MLP on the averaged electronic coordinates:
\begin{align}
    J(\vr) =& \text{MLP}\left(\sum_{i=1}^N \left[r_i - R_m, \Vert r_i - R_m\Vert\right]_{m=1}^M\right).
\end{align}
We picked this specific formulation as it provides a similar computational cost to the odd function defined in Section~\ref{sec:method} as it does not explicitly consider the individual electronic coordinates.
We use this Jastrow as a standalone Jastrow or the symmetric function in Equation~\ref{eq:implicit} and \ref{eq:explicit}.

As standard in the field, we pretrain the FermiNet on a Hartree-Fock wave function and then optimize the odd function with the pretrained FermiNet within the VMC framework~\citep{pfauInitioSolutionManyelectron2020}.

We implement everything in JAX~\citep{bradburyJAXComposableTransformations2018}. To compute the laplacian from \Eqref{eq:hamiltonian}, we use the forward laplacian algorithm from \citet{liForwardLaplacianNew2023} implemented by \citet{gaoFolxForwardLaplacian2023}.

We use the hyperparameters listed in Table~\ref{tab:hyperparameters} for our experiments.
\begin{table}
    \centering
    \begin{tabular}{lc}
        \toprule
        Hyperparameter & Value \\
        \midrule
        \midrule
        \textbf{CASSCF} & \\
        Basis set & 6-311G \\
        \midrule
        \textbf{FermiNet} & \\
        Determinants & 8 \\
        Single electron features & 256\\
        Pairwise features & 32\\
        Layers & 4\\
        Activation & SiLU\\
        \midrule
        \textbf{Odd functions} & \\
        Layers & 5\\
        Hidden units & 256\\
        $\alpha$ & -2\\
        \midrule
        \textbf{Jastrow} & \\
        Layers & 3\\
        Hidden units & 256 \\
        \midrule
        \textbf{Optimization} & \\
        Optimizer & KFAC\\
        Learning rate & $\frac{0.05}{1 + \frac{t}{1000}}$\\
        Damping & 0.001\\
        Batch size & 2048\\
        MCMC steps & 20\\
        \bottomrule
    \end{tabular}
    \caption{Hyperparameters used in the experiments if not otherwise specified.}
    \label{tab:hyperparameters}
\end{table}

\section{Additional Experiments}\label{app:additional_experiments}
In addition to the experiments in the main body, we present additional empirical data here.
As an alternative to the KFAC optimizer, we present results with prodigy~\citep{mishchenkoProdigyExpeditiouslyAdaptive2023}, a learning-rate-free version of Adam~\citep{kingmaAdamMethodStochastic2014}, in Figure~\ref{fig:cas_no_jastrow_prodigy}.
However, the conclusion remains the same: The odd functions improve energies on small structures and worsen them on large ones.
Further, we present CASSCF experiments with Jastrow factors in Figure~\ref{fig:casscf_jastrow}.
Here, we observe that including a Jastrow factor closes the gap between the linear and non-linear combinations.
Finally, we present CASSCF experiments in float64 precision in Figure~\ref{fig:casscf_float64} to show that the numerical instabilities are not due to numerical precision.

\begin{figure}
    \centering
    \includegraphics[width=\linewidth]{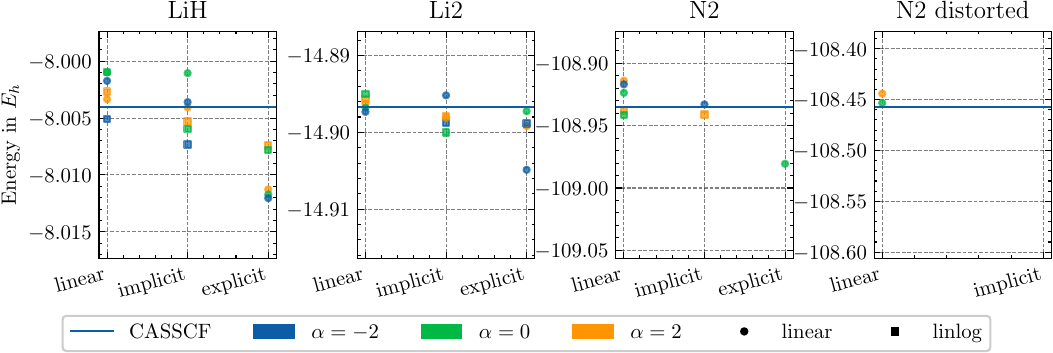}
    \caption{Final energies of CASSCF+odd optimized with Prodigy. Missing entries were numerically unstable and encountered NaNs during training.}
    \label{fig:cas_no_jastrow_prodigy}
\end{figure}
\begin{figure}
    \centering
    \includegraphics[width=\linewidth]{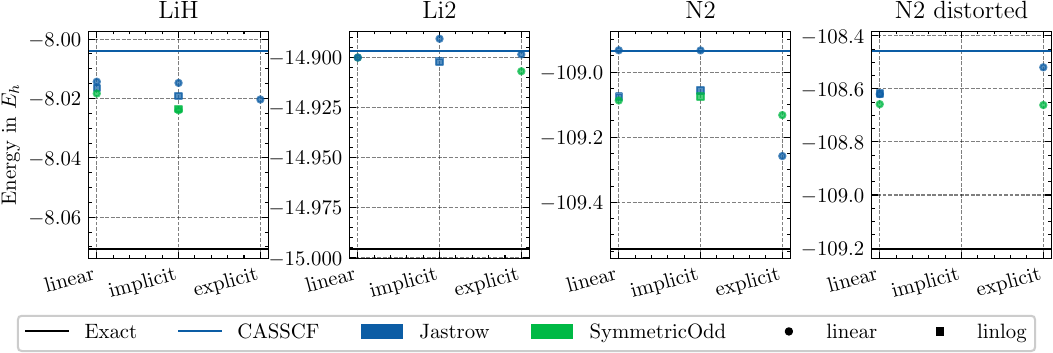}
    \caption{
        The final energies of CASSCF+odd optimized with KFAC or Prodigy. Missing entries encountered NaNs during training.
        Colors indicate the use of symmetric functions: Traditional Jastrow factor, and `SymmetricOdd' implies including the symmetric Jastrow factor as in Equation~\ref{eq:implicit}, and \ref{eq:explicit}. $\alpha=-2$.
    }
    \label{fig:casscf_jastrow}
\end{figure}
\begin{figure}
    \centering
    \includegraphics[width=\linewidth]{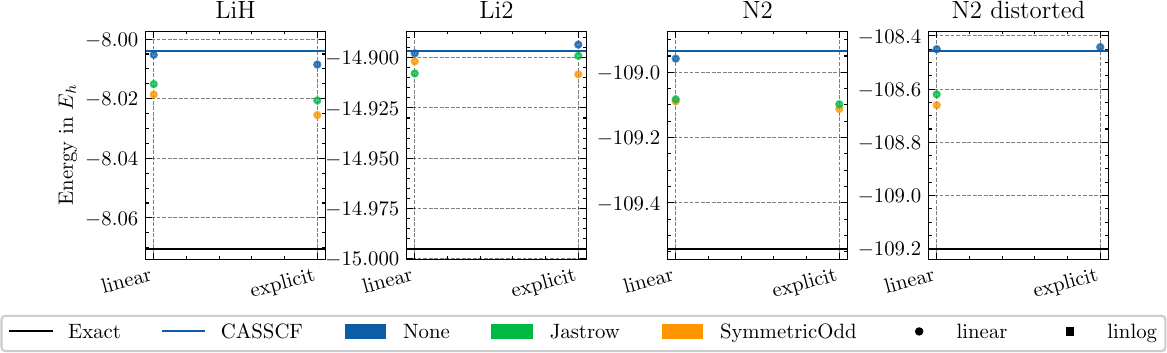}
    \caption{
        Final energies of CASSCF+odd optimized with Prodigy.
        Training in float64. $\alpha=-2$.
    }
    \label{fig:casscf_float64}
\end{figure}

\end{document}